\newtheorem{theorem}{Theorem}
\setlist{noitemsep,topsep=0pt,parsep=0pt,partopsep=0pt}
\title{Diverse Weighted Bipartite $b$-Matching}
\author{Faez Ahmed\\ 
Dept. of Mechanical Engg.\\
University of Maryland\\
faez00@umd.edu
\And
John P. Dickerson\\
Dept. of Computer Sci.\\
University of Maryland\\
john@cs.umd.edu
\And
Mark Fuge\\ 
Dept. of Mechanical Engg.\\
University of Maryland\\
fuge@umd.edu
}
\newcommand{\eg}{{\em e.g.}}
\newcommand{\ie}{{\em i.e.}}
\newcommand{\etal}{{\em et~al.}}
\newcommand{\cutequationup}{\vspace*{-0.12in}}
\newcommand{\PoD}{PoD}
\begin{document}

\maketitle

\begin{abstract}
Bipartite matching, where agents on one side of a market are matched to agents or items on the other, is a classical problem in computer science and economics, with widespread application in healthcare, education, advertising, and general resource allocation. A practitioner's goal is typically to maximize a matching market's economic efficiency, possibly subject to some fairness requirements that promote equal access to resources.  A natural balancing act exists between fairness and efficiency in matching markets, and has been the subject of much research.

In this paper, we study a complementary goal---balancing \emph{diversity} and efficiency---in a generalization of bipartite matching where agents on one side of the market can be matched to \emph{sets} of agents on the other.  Adapting a classical definition of the diversity of a set, we propose a quadratic programming-based approach to solving a supermodular minimization problem that balances diversity and total weight of the solution.  We also provide a scalable greedy algorithm with theoretical performance bounds.  We then define the \emph{price of diversity}, a measure of the efficiency loss due to enforcing diversity, and give a worst-case theoretical bound.  Finally, we demonstrate the efficacy of our methods on three real-world datasets, and show that the price of diversity is not bad in practice. Our code is publicly accessible for further research.\footnote{\url{https://github.com/faezahmed/diverse_matching}} 
 
\end{abstract}

\section{Introduction}\label{sec:intro}
Bipartite matching problems pair an agent or item on one side of a market to an agent or item on the other. Weighted bipartite $b$-matching generalizes this problem to the setting where matches have a real-valued quality, and agents on one side of the market can be matched to a cardinality-constrained \emph{set} of items or agents on the other side; real-world examples include matching children to schools~\cite{Kurata15:Controlled,Drummond15:SAT}, reviewers to manuscripts~\cite{Charlin13:Toronto,Liu14:Robust}, donor organs to patients~\cite{Bertsimas13:Fairness,Dickerson15:FutureMatch}, and workers to firms~\cite{Horton17:Effects}.

Often, a matching market's central goal is to maximize economic efficiency subject to some fairness constraints, such as ensuring equal opportunity amongst participants.  For example, a firm might wish to maximize the number of open positions filled subject to a fairness constraint: a firm must interview a representative number of workers from marginalized backgrounds.  Yet, the firm also cares about the entire cohort's ability: the workers it hires should hold high quality, yet complementary skill sets. This paper studies the trade-off between economic efficiency and \textit{diversity}, where matchings provide good coverage over different \emph{classes} of item or agent.

A representative example this paper considers is matching academic papers to possible reviewers. A paper might have highest relevance to three reviewers who all come from the same lab group, perhaps because they all published heavily in a similar area. Existing weighted bipartite $b$-matching (WBM) algorithms would likely assign those three reviewers to the same paper. Is this outcome desirable? On the one hand, yes, because they have expertise related to the paper. On the other hand, those reviewers would stress similar points, given their common background. So the paper may only improve in a narrow (albeit important) direction. What if we wanted to diversify the reviewer backgrounds\textemdash to find reviewers well-suited to the paper \textit{and} complementary to each other? Ideally, the reviews would remain high quality, but would cover different, complementary aspects of the paper.

This paper addresses how to compute diverse matchings under various constraints, bounds the loss on economic efficiency due to using a diverse matching, and shows in simulation and on data from three real-world bipartite matching problems that it is possible to achieve diverse matchings with limited cost to economic efficiency.

%%%%%%%%%%%%%%%%%%%%%%%%%%%%%%%%%%%%%%%%%%%%%%%%%%%%%%%%%%%%%%%%%%%%%%%%%%%%%%%%
\subsection{Related Work}\label{sec:intro-rw}

In practice, the weighted bipartite $b$-matching (WBM) problem\textemdash find the feasible matching with maximum weight\textemdash has arisen naturally as a problem in many fields, such as: protein structure alignment~\cite{Krissinel04:Secondary}; computer vision~\cite{Belongie02:Shape}; estimating text similarity~\cite{Pang16:Text}; VLSI design~\cite{Huang91:Data}; and matching reviewers to papers in peer-review systems~\cite{Charlin13:Toronto,Liu14:Robust,Tang10:Expertise}. Driven by practical application, such previous work aims to maximize economic efficiency.  We will compare against this objective in the present work.

This paper incorporates diversity objectives into the WBM problem. The closest related paper, due to Liu \emph{et al.}~\shortcite{Liu14:Robust}, performs a node-specific diversity-inspired preprocess before solving a related matching problem; in our work, we consider the ``global'' diversity of the full matching, a function of the diversity of \emph{sets} of vertices.  Other papers have addressed diversity in ranking problems (\eg, diverse recommendations~\cite{Adomavicius12:Improving,sha2016framework,ashkan2015optimal}), but not for matching. Past approaches mathematically represent \emph{coverage} of a set of items, such that a diverse set better \emph{covers} the space of items. This coverage is often defined via diminishing marginal utility over a space, such that adding more items to nearby areas of space is less useful. There are many application-dependent choices for what such a space entails including vector spaces such as text vectors~\cite{Puthiya16:Coverage} or metrics over graphs~\cite{Zhang05:Improving}, among others. To represent diminishing marginal utility, families of \emph{submodular functions} are natural candidates that have shown promise in diversity tasks like document summarization~\cite{Lin11:Class}. We will use similar reasoning when defining our objectives.

The most similar work to ours is due to Chen~\emph{et al.}~\shortcite{Chen16:Conflict}, who propose Conflict-Aware WBM (CA-WBM). They consider conflict constraints between vertices on the same side of a bipartite graph. In CA-WBM, if two vertices are in conflict, they may not both be matched to a vertex on the other side of the graph. CA-WBM enforces a kind of binary diversity by manually defining conflicts between specific nodes. In contrast, this paper treats diversity as an objective, not a constraint, allowing us to flexibly control the degree to which a matching algorithm encourages or discourages diverse solutions to the standard WBM problem. This is useful when one wants conflicts or diversity to vary in degree, or trade off diversity with other measures of match quality.

\subsection{Our Contributions}\label{sec:intro-contributions}
This work studies the trade-off between diversity and efficiency in matching markets. This is different from earlier work as the diversity measure is modeled as an objective and not as constraints, and diversity is defined over \emph{sets} of items. 

Our main contributions are as follows.
\begin{enumerate}
\item We formulate the diverse weighted bipartite $b$-matching optimization problem.
\item We propose a polynomial-time greedy algorithm for constrained $b$-matching, and prove performance bounds on that relative to the NP-hard main problem.
\item We show via simulation and data from three large real-world bipartite matching problems that our method produces matchings with much higher diversity than standard efficient matchings, at little overall cost to economic efficiency.
\end{enumerate}

In the following Section~\ref{sec:prelims}, we formalize the weighed $b$-matching optimization problem; then, in Section~\ref{sec:diversity}, we define our diversity-promoting objective and present the \emph{price of diversity}, a measure of the tradeoff in economic efficiency under a diverse matching objective.  Section~\ref{sec:algos} presents an optimal method for solving our problem, a scalable polynomial-time greedy algorithm with performance bounds, and a worst-case bound on the price of diversity.  Section~\ref{sec:experiments} shows via simulation and on real data from three matching problems that (i) our method promotes diversity in matching, (iii) the greedy approximate algorithm is both scalable and performs comparable to optimal, (ii) both algorithms retain dramatically more efficiency than our worst-case bounds implied; that is, the price of diversity in practice is quite good.

\section{Weighted Bipartite Matching}\label{sec:prelims}
% \begin{table}
% \begin{tabular}{|l|l|}
% \hline
% \multicolumn{2}{|c|}{Notation for Bipartite matching} \\
% \hline
% M & Number of nodes on left side \\
% N & Number of nodes on right side \\
% $L^-$ & Lower cardinality constraint on left side \\
% $L^+$ & Upper cardinality constraint on left side \\
% $R^-$ & Lower cardinality constraint on right side \\
% $R^+$ & Upper cardinality constraint on right side \\
% \hline
% \end{tabular}
% \label{tab:nota}
% \end{table}

\begin{figure}
\centering
\includegraphics[width=0.5\columnwidth]{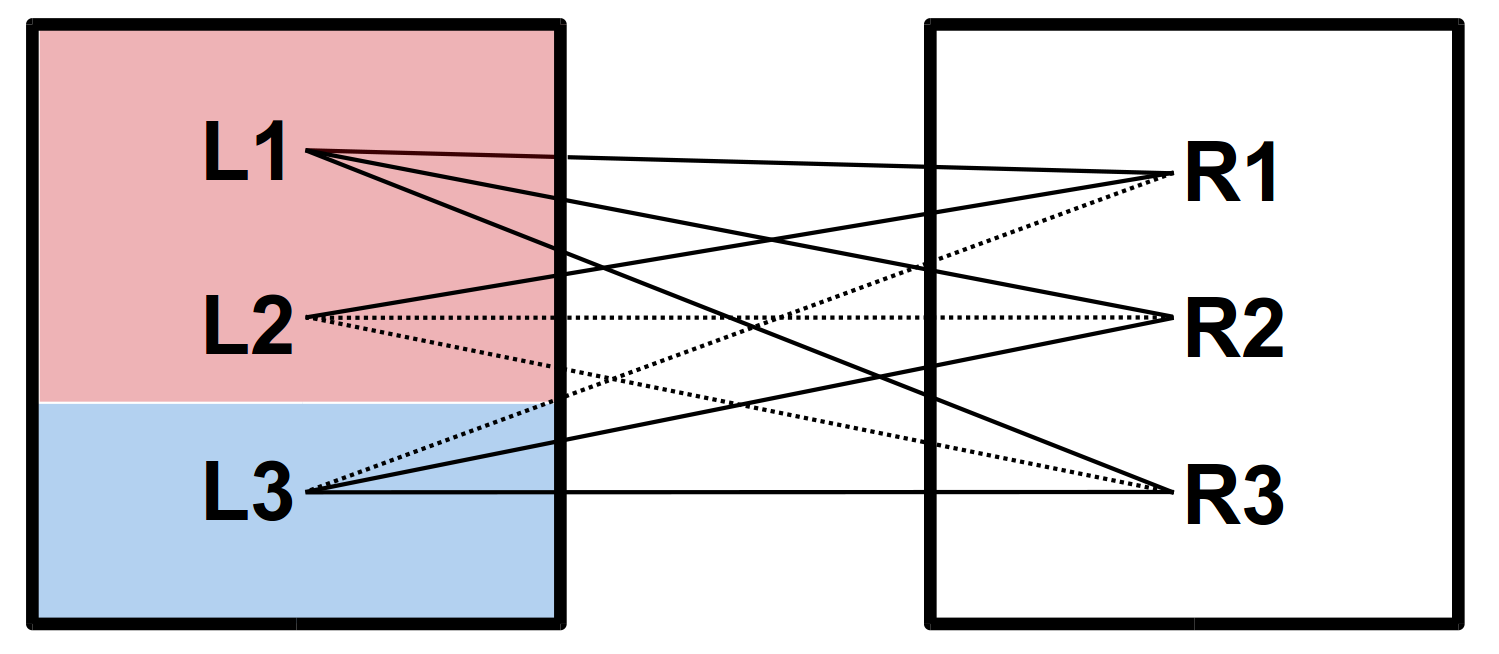}
\caption{Bipartite $b$-matching problem where the left side nodes are divided into two clusters.} %%Item R1 matches with both items in same cluster with zero entropy while Items R2 and R3 match with a diverse set}    %% JPD -- this caption is unclear?
\label{fig:match}
\end{figure}
%\vspace{5mm}

%Table \ref{tab:nota} defines the notations used in this paper.
Weighted bipartite $b$-matching is a combinatorial optimization problem formulated as follows. Given a weighted bipartite graph $G=(U,V,E)$ with weights 
$ W:E \rightarrow  R^+ $, where $U$, $V$ and $E$ represent left vertices, right vertices and edges, the weighted bipartite $b$-matching problem is to find a subgraph $T \subset G$ such that each vertex $i$ in $T$ has at most $b$ edges (\ie, a degree constraint). WBM maximizes or minimizes the objective depending on the application.

We use similar notation to Chen \emph{et al.}~\shortcite{Chen16:Conflict} to define a weighed bipartite $b$-matching problem, with two notable differences.  First, we define it as a minimization problem, and second, we define a harder problem which has both node-specific upper- and lower-cardinality constraints.  The constrained weighed bipartite $b$-matching (WBM) problem can be expressed as follows.
%\textbf{Define b-matching}

\vspace{-0.4cm}
{\small
\begin{equation}
\begin{array}{lrr}
\underset{X}{\text{min}} & f_1 = WX & \\
\text{s.t.}
%& & LB_i \leq Ax_i\leq UB_i, \forall i, i = 1, \ldots, M+N\\
& L^- \leq AX_i\leq L^+ & \forall i \in \{1, \ldots, M\}\\
& R^- \leq AX_i\leq R^+ &\forall i \in \{M+1, \ldots, M+N\}\\
%& & & Ax_i \leq UB_i, \forall i, i = 1, \ldots, M+N\\
& x_{ij} \in \{0,1\} & \forall i,j, 1 \le i \le M, 1 \le j \le N 
\end{array}
\label{eq:wbm}
\end{equation}
}
\vspace{-0.2cm}

We have $N$ items on the right side with $R^-$ and $R^+$ integral lower and upper cardinality constraints, respectively, and $M$ items on the left side with $L^-$ and $L^+$ as integral cardinality bounds.
Here, $X$ is a column vector of binary variables of size $MN$, with $x_{ij}=1$ if left item $i$ is matched to right item $j$, and $x_{ij}=0$ otherwise. $W$ is a matrix of weights $w_{ij}$ representing the local quality of matching items $i$ and $j$.

Items on the same side of bipartite graph cannot be matched; thus, we use $A$ as a linking matrix, such that any row $i$ indicates which nodes are allowed to be connected to item $i$.  We index edges $(i,j)$ uniquely using a function $\ell : E \to \{1, \ldots, MN\}$.  Then, $A$ is an $(M+N) \times MN$ matrix, where $a_{i\ell((i,j))}=1$ if edge $(i,j)$ exists; otherwise, $a_{i\ell((i,j))}=0$.  The degree constraints for left nodes are given by $L^- \leq AX_i \leq L^+$, where $AX_i$ denotes the $i^{\mathit{th}}$ element in (vector) $AX$.  $L^+$ is the upper bound on cardinality of $i^{\mathit{th}}$ node and $L^-$ is the lower bound.

The above formulation shows a constrained matching problem, where nodes on both sides have capacity constraints. This discrete linear optimization problem is NP-hard~\cite{Chen16:Conflict}. Its optimal solution will minimize the weights, emphasizing on efficiency and neglecting diversity.

\section{Diversity in Matching}\label{sec:diversity}
Diversity in matching can be defined as the need to match a node with other nodes from different groups. To add diversity, we consider a scenario where left-side nodes are divided into $K$ groups. Let us say that we want a matching which matches each node on the right side to nodes from different clusters. 
The diversity is calculated using supermodular functions. These functions have been widely used in extractive document summarization~\cite{Lin11:Class} to get a diverse high quality summary of documents. We use a quadratic function which can incorporate diversity by balancing the number of nodes (e.g., items or agents) selected from different clusters.

Let $E_l = \{(1,l), \ldots, (M,l)\}$ be the set of all $M$ edges from a node $l \in \{1,\ldots,N\} $ on the right side of the graph. Let the subset $S_l \subseteq E_l = \{(1,l),...,(m,l)\}$ be the matched $m$ edges for node $l$. Let $(P_i)^l$, $i \in \{1,\ldots,K\}$ is a partition of the ground set $E_l$ into $K$ separate clusters (\ie, $\cup _i P_i^l = E_l$, and $\cap_iP_i^l = \varnothing$). That is, a left item can only belong to one cluster. The weight of an edge from left node $n$ to right node $l$ is $w_{n,l}$. We define the quality of match for node $l$ on the right side as:
\cutequationup
\begin{equation}
f(l) = \sum_{k=1}^K \left \{ {\sum_{j\in S_l\cap P_k^l}w_{j,l}} \right \}^2
\label{eq:div}
\end{equation}

This quadratic function gives lower cost to solutions with even coverage over all clusters. As an example, Figure~\ref{fig:match} shows three nodes on either side, each requiring two edges. If all edge weights $w$ are one, the node-specific utility of a matching $\{(L1,R1), (L2,R1)\}$ is $4$, while the utility of alternate matching $\{(L1,R1), (L3,R1)\}$ is $2$. Hence, by minimizing the function in Equation~\ref{eq:div}, diversity is encouraged. By transforming the matching problem to quadratic minimization, the resultant objective function simultaneously optimizes quality and diversity. In the next section, we provide a formal framework to generalize this function to constrained $b$-matching problems.

To the best of our knowledge, no known general measure exists to measure the performance of diverse $b$-matching methods. Even verification of diverse matching is difficult, due to different definitions of diversity in the literature. One way of comparing our diversity results is to look at the Shannon entropy of a match for each item, with and without our method. Shannon entropy has been
used to incorporate diversity in recommendations \cite{qin2013promoting,DiNoia:2014:AUP:2645710.2645774} and also
widely used in the ecological literature as a diversity index. It
quantifies the uncertainty in predicting the cluster label of an individual that is taken at random from the dataset. Here entropy of a node is given by: $-\sum_{i=1}^K{(p_k \log{p_k})}$, where $p_k$ is the proportion of selected edges in cluster $K$.

Entropy for an item is maximized if it is matched to other items with even coverage of different clusters; it is zero when all such items are from the same cluster. 
Hence, the impact of diverse matching can be measured as improvement in average entropy. We define the \emph{entropy gain} (EG) as:
\begin{equation}
EG = \frac{\mbox{Average entropy using a diverse matching rule}}{\mbox{Average entropy using WBM}}
\end{equation}

We also propose a new metric to measure the efficiency lost due to diversity. We define the \emph{price of diversity} (PoD) as:
\begin{equation}
PoD=\frac{\mbox{Utility using WBM}}{\mbox{Utility using a diverse matching rule}}.
\end{equation}

Later in the paper, we will show in simulation and on real data that the entropy gain under our proposed diverse matching method is high, at very little cost to overall efficiency.

\section{Exact and Approximate Algorithms}\label{sec:algos}

In our bipartite matching formulation with utility minimization, the degree constraints $L^-$ and $R^-$ can be interpreted as setting the demand. The short side of market determines the number of edges in the matching, which is $\min\{ML^-, NR^-\}$. If the right side is short, the maximum capacity on the left should be more than the demand: $NR^-  \le ML^+$ for any matching to be feasible. For the purpose of this paper, we always assume that right side is the short side of market and the left side is clustered into groups. The cardinality constraints make the problem more difficult than what has usually been solved for matching as nodes cannot be matched independent of each other.

\subsection{Diverse WBM}

To generalize the quadratic function (cf. Equation~\ref{eq:div}) to an optimization framework for all nodes, we define a $MN \times MN$ block-diagonal matrix $B=\text{diag}(B_1,\ldots, B_M)$ such that:
\[
  B_l=\begin{cases}
               w_i \cdot w_j  ~~\text{if edges } i,j\in~P_k^l~\text{(same cluster)}\\
               0~~\text{otherwise}
            \end{cases}
\]

$B_l$ is the block diagonal matrix for every right node, with $K$ blocks on the diagonal corresponding to each cluster. Matrix $B$ is a diagonal matrix for all $B_l$ matrices combined. Later, we show a visualization of the symmetric $B_l$ matrix in Fig.~\ref{fig:block} for five clusters for reviewer matching application. Hence, the optimization problem for Diverse WBM (D-WBM) can be written as:

\begin{equation}
\begin{aligned}
& \underset{X}{\text{minimize}}
& &f_2(X) =  X^TBX \\
% & \text{subject to}
% & & L^- \leq AX_i\leq L^+, \forall i, i = 1, \ldots, M\\
% & & & R^- \leq AX_i\leq R^+, \forall i, i = M+1, \ldots, M+N\\
% & & & x_{ij} \in \{0,1\}, \forall i,j, 1 \le i \le M, 1 \le j \le N
%& & LB_i \leq Ax_i \leq UB_i, \; i = 1, \ldots, M+N.
\end{aligned}
\label{eqdpwbm}
\end{equation}

%The formulation generalizes the previous method of diversification function to summation over all nodes.

To show that this formulation is equivalent to Eq.~\ref{eq:div}, let us again consider $R1$ in Fig.~\ref{fig:match} with two clusters, three edges and unit weights.
Using Eq.~\ref{eqdpwbm} for the node-specific utility of a matching $\{(L1,R1), (L2,R1)\}$  is $[1, 1, 0]’[1, 1, 0; 1, 1, 0; 0, 0, 1][1, 1, 0]~=~4$ and the utility of alternate matching $\{(L1,R1), (L3,R1)\}$ is $[1, 0, 1]’[1, 1, 0; 1, 1, 0; 0, 0, 1][1, 0, 1]=2$. This is same as obtained by Eq.~\ref{eq:div} before. The constraints and variables are the same as in WBM (cf. Equation~\ref{eq:wbm}). Our new model has a quadratic objective with linear constraints and integrality requirement for variables. 
We solve it using two different approaches, first using Gurobi's Mixed Integer Quadratic Programming (MIQP) Solver~\cite{gurobi}, and second by using a novel greedy algorithm that builds up a set by minimizing marginal gain.
Next, we propose this greedy algorithm and give bounds on its performance.

\subsection{Greedy Diverse WBM}
The objective of the D-WBM formulation is supermodular, that is, adding new elements leads to (strictly) increasing differences. Hence a method which greedily adds edges by minimizing the marginal gain can attain reasonable performance bounds~\cite{Nemhauser78:Analysis}.
Secondly, solving the MIQP exactly requires storing the block diagonal matrix; for large problems, MIQP may run out of memory as the number of non-zero terms in the matrix scales by $M^2N$.  

%Below explanation can be improved
We propose an algorithm which incrementally \emph{satisfies} the lower degree constraints for all nodes. It does this by increasing the lower bound of all nodes unit step at a time and selecting edges by minimizing marginal gain in the objective $f_2$. In selecting edges, it gives preference to the set of nodes with unsatisfied lower bound. This ensures that the greedy selection always finds a feasible matching with good empirical performance.

\begin{algorithm}
\DontPrintSemicolon % Some LaTeX compilers require you to use \dontprintsemicolon instead 
\KwIn{A set of $N+M$ nodes, bounds $L^-, L^+, R^-, R^+$ and edge weights matrix $B$}
\KwOut{A feasible matching}
$C \gets \emptyset$\;
\For{$i \gets 1$ \textbf{to} $\max\{L^-, R^-\}$}{
$L_i^- = \min\{i, L^-\}; R_i^- = \min\{i, R^-\}$\;
\For{$j \gets 1$ \textbf{to} $N+M$}{
\If{$j$'s lower bound is not satisfied} {
Select the feasible edge $e$ with lowest marginal gain $f(C \cup \{ e \}) - f(C)$ whose opposite node's lower bound is not satisfied\\
If no such node exists, pick the feasible edge with lowest marginal gain.
%   \For{$f \gets 1$ \textbf{to} $|V - C|$}{
%   $C \gets C \cup \{c_f\}$\;}
  }
 }
}
\Return{$C$}\;
\caption{{\sc GD-WBM} Greedy Diverse Matching}
\label{algo:change}
\end{algorithm}

For a right constrained matching, the matching for every right node is independent of others as they do not have overlapping constraints. Hence GD-WBM achieves a $(1- 1/e)$-approximation of the optimum due to its supermodular objective function. In practice, Section~\ref{sec:experiments} will show that GD-WBM performs much better than the lower bound.

\subsection{Price of Diversity Bound}
In this section, we propose lower bounds on the price of diversity (\PoD{})---the utility loss due to diverse matching---in right-constrained market.  Theorem~\ref{thm:pod} gives such a bound.

\begin{theorem}\label{thm:pod}
The worst-case Price of Diversity (\PoD{}) for a right-constrained diverse matching is: 
\begin{equation}\label{eq:thm-pod}
\begin{aligned}
\frac{1}{N} \sum_{l=1}^{N}{\frac{z_l}{1+\sqrt{R^--1}\sqrt{{z_l}^2-1}}}, \text{ where } z_l=\frac{\sum{w_{jl}}}{\min{w_{jl}}}
\end{aligned}
\end{equation}
\end{theorem}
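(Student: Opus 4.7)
The plan is to exploit the right-constrained independence (each right node's matching is independent, as noted just above the theorem) so that the overall $\PoD{}$ decomposes into an average over right nodes $l$, and I only need to bound $\PoD{}_l = U^W_l / U^D_l$ per node, where $U^W_l$ and $U^D_l$ denote the WBM and diverse total weights at $l$ respectively. I will denote $m_l = \min_j w_{jl}$, so $z_l = U^W_l/m_l \geq R^-$ by definition, with equality iff the $R^-$ lightest edges at $l$ all have weight $m_l$.

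Next, I would identify the worst-case per-node configuration by constructing it explicitly. The adversary places all $R^-$ WBM-selected edges into a single cluster, giving $f_2^{WBM} = (U^W_l)^2$ (the maximum possible value of $f_2$ consistent with WBM's total). It then introduces $R^- - 1$ additional ``spread'' edges, each in its own distinct fresh cluster, with common weight $w$ tuned so that the diverse objective is exactly indifferent between sticking with WBM and switching to the alternative that keeps only the lightest edge (weight $m_l$) and adds each of the $R^- - 1$ new edges in separate clusters. Equating the two $f_2$ values gives $m_l^2 + (R^- - 1) w^2 = (U^W_l)^2$, so $w = \sqrt{((U^W_l)^2 - m_l^2)/(R^- - 1)}$. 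Factoring $m_l$ out of the diverse total, $U^D_l = m_l + (R^- - 1)w = m_l\bigl(1 + \sqrt{R^- - 1}\,\sqrt{z_l^2 - 1}\bigr)$, and hence $\PoD{}_l = z_l / \bigl(1 + \sqrt{R^- - 1}\,\sqrt{z_l^2 - 1}\bigr)$. Averaging over $l = 1, \ldots, N$ yields the stated formula.

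The main obstacle is showing that this adversarial construction actually realizes the worst case over all partitions and weights, rather than merely providing one tight example. The route I would take is to chain two inequalities. First, $f_2^{Div} \leq f_2^{WBM} \leq (U^W_l)^2$: the left inequality by optimality of the diverse selection for the quadratic objective, the right by the elementary fact that a sum of non-negative squares is at most the square of the sum. Second, a sharp lower bound on $f_2^{Div}$ in terms of $U^D_l$ and $m_l$ obtained by isolating the cluster $k^*$ that contains the lightest diverse-selected edge (so $s_{k^*} \geq m_l$) and applying Cauchy--Schwarz to the remaining at most $R^- - 1$ occupied clusters, yielding $f_2^{Div} \geq m_l^2 + (U^D_l - m_l)^2/(R^- - 1)$. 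Combining these and solving for $U^D_l$ reproduces the ceiling $U^D_l \leq m_l\bigl(1 + \sqrt{R^- - 1}\,\sqrt{z_l^2 - 1}\bigr)$, i.e., the per-node bound.

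The subtle point -- and where I expect most of the care to be needed -- is justifying that the isolation argument is indeed tight in the worst case: that the adversary can force the lightest edge to sit alone in its cluster, so the inequality $s_{k^*}\geq m_l$ is saturated at $s_{k^*} = m_l$ and the other $R^-{-}1$ clusters contain one edge each. The explicit construction in the second paragraph is precisely what witnesses this, so the construction and the abstract bound fit together, but a careful case analysis is required to rule out partitions in which the lightest edge shares a cluster with other diverse-selected edges.
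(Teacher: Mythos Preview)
Your proposal follows essentially the same route as the paper's proof sketch. Both arguments decompose the right-constrained problem node-by-node, posit that the worst case places all of WBM's $R^-$ edges in one cluster while D-WBM spreads across $R^-$ distinct clusters with the lightest edge $a_1=m_l$ retained, and then solve the resulting constrained optimization. The paper phrases this last step as ``maximize $\sum a_i$ subject to $\sum a_i^2\le(\sum w_i)^2$ with $a_1=m_l$ fixed'' and appeals to the Lagrangian (optimum on the sphere with $a_2=\cdots=a_m$); your Cauchy--Schwarz lower bound $f_2^{Div}\ge m_l^2+(U^D_l-m_l)^2/(R^--1)$ is exactly the dual of that Lagrangian step, so the two computations are equivalent.

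One remark: the step you flag as subtle---that in the worst case the diverse solution isolates the lightest edge in its own cluster so that $s_{k^*}=m_l$---is also simply asserted in the paper (``Both WBM and D-WBM select edge $a_1=\min_i w_i$ from cluster~1''), not proved. Your Cauchy--Schwarz bound $f_2^{Div}\ge m_l^2+(U^D_l-m_l)^2/(R^--1)$ is \emph{not} valid for an arbitrary feasible diverse selection (a counterexample: three edges of weight $2$ in three singleton clusters, with an unselected weight-$1$ edge elsewhere, gives $f_2=12$ but your bound would demand $13.5$); it relies on the diverse optimum actually containing the global-minimum edge, which you would need to argue separately (e.g., by a swap argument when that edge sits alone in its cluster, as your adversarial construction arranges). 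Since the paper's sketch makes the same tacit assumption, your proposal is at the same level of rigor and matches it.
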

\begin{proof}[Proof sketch.]
We wish to find a problem instance where the best diverse matching has high weight under the utilitarian objective.  Such \PoD{} for a matching will be minimized when diversity leads to all low-weight weight edges being replaced by higher weight edges. Such a situation can occur when WBM provides a match for a right node with all $m$ edges going to left nodes in the same cluster $1$. Let $\{w_1, \ldots, w_m\}$ be such edge weights. In this instance, let D-WBM select edges going to $m$ unique clusters. Here, D-WBM will select the single edge with least weight from each cluster, including cluster $1$. Call those edges $\{a_1, \ldots, a_m\}$ be the selected edge weights by D-WBM. The edge weights will satisfy the following constraints: 
\begin{equation}
\begin{aligned}
&\sum_{i=1}^m{w_i} \le \sum_{i=1}^m{a_i} ~~~~\text{and}~~~~ (\sum_{i=1}^m{w_i})^2 \ge \sum_{i=1}^m{{a_i}^2}
\end{aligned}
\end{equation}

Both WBM and D-WBM select edge $a_1=\min_{i\in[m]}{w_i}$ from cluster 1. To minimize \PoD{}, the denominator---$f_1$ of the diverse matching---$\sum_{i=1}^m{a_i}$ should be maximized. Using the Lagrangian method, this constrained maximization problem is solved, with optima occurring at the surface of a hypersphere and $a_2=a_3 \ldots =a_m$.
\end{proof}

If the minimum weight for every right node is $0$, by taking limits on Eq.~\ref{eq:thm-pod}, the \PoD{} becomes $\frac{1}{\sqrt{R^--1}}$.
In the succeeding section, we will show that Theorem~\ref{thm:pod} is quite conservative.

\section{Results and Discussion}\label{sec:experiments}

We begin by comparing the two methods' \PoD{} to its theoretical bound on a synthetic dataset.  Next, we solve the $b$-matching problem on three datasets, one for movie recommendation and two for papers--reviewers matching. 
We analyze the effect of problem size by increasing the number of nodes and the cardinality requirements. 
Finally, we discuss the trade-off between diversity and utility.

\subsection{Artificial Dataset}
In this section, we simulate matching $10$ nodes with another $10$ nodes; by Theorem~\ref{thm:pod}, the worst-case \PoD{} is $0.5$. Weights are selected randomly from a uniform distribution between $0$ to $1$ and the cluster labels of the left-side nodes are selected randomly. For right-constrained matching, we use $R^-=5$, implying that each right side node will be matched to at least five items. The number of clusters are varied from $2$ to $10$, and $100$ trials are done to compare D-WBM with WBM. 

Figure~\ref{fig:artificial} shows that in practice, \PoD{} is never below $0.9$ despite random clusters and weights. Also, EG generally decreases when \PoD{} increases. The greedy approximation GD-WBM finds the same matches as D-WBM for all cases.

\begin{figure}
\centering
\includegraphics[width=0.95\columnwidth]{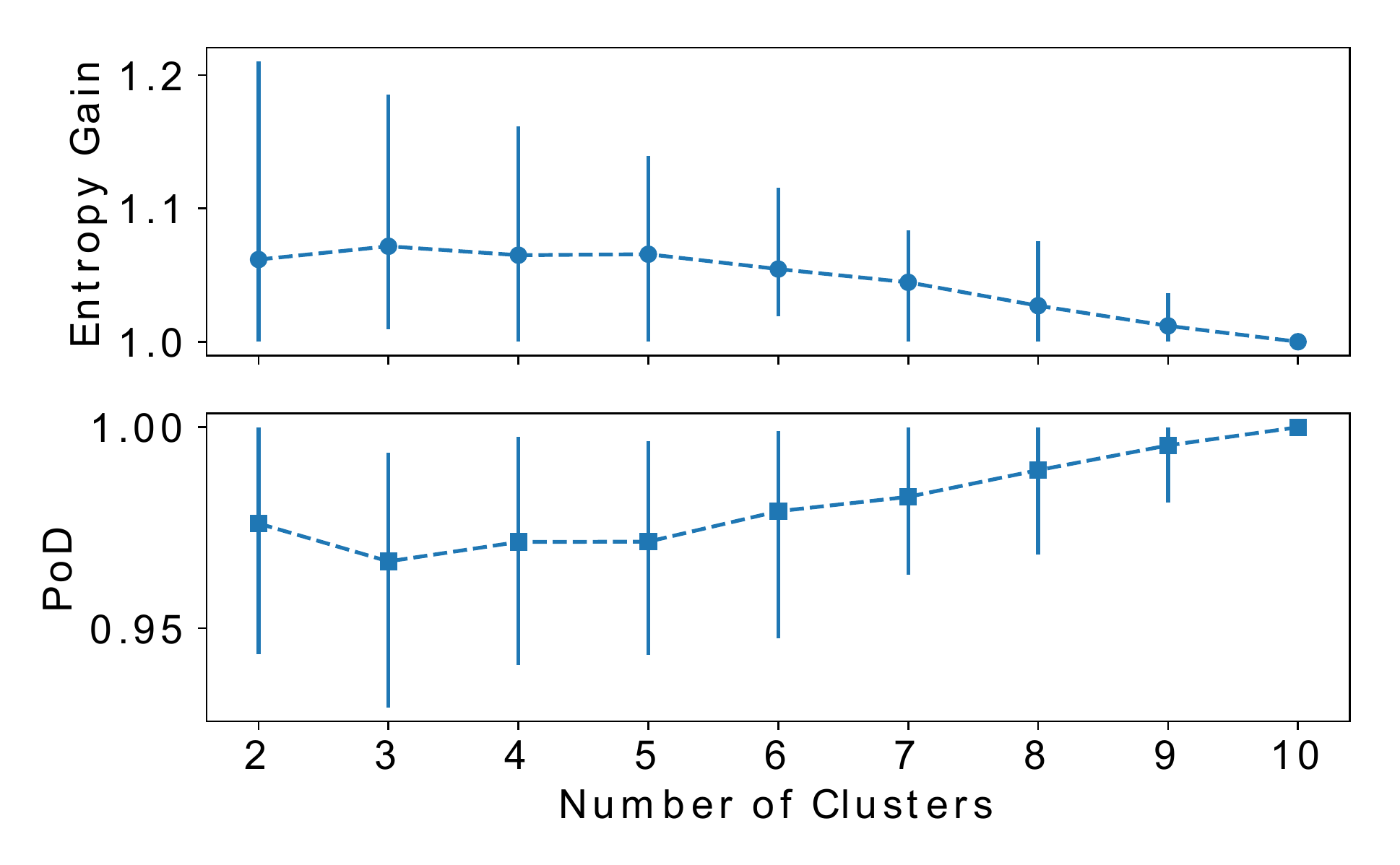}
\caption{\PoD{} and EG for a simulated dataset showing the average \PoD{} with $5^{\mathit{th}}$ and $95^{\mathit{th}}$ percentile values. D-WBM unilaterally outperforms the worst-case \PoD{} of $0.5$.}
\label{fig:artificial}
\end{figure}

\subsection{Application to MovieLens dataset}

This example considers matching movies to users, while ensuring that the movies contain diverse genres. We use a subset of the MovieLens 1M dataset \cite{harper2016movielens}, which includes one million ratings by 6,040 users for 3,900 items. This dataset contains both users' movie ratings between 1 and 5 and genre categories for each movie (\eg, comedy, romance, and action).

We first train a standard collaborative recommender system \cite{reco} to obtain ratings for all movies by every user. We cluster the movies into $5$ clusters using their vector of $18$ genres using spectral clustering,\footnote{The exact choice of recommender system and clustering algorithm is not central to paper; it just helps set up the graph.} so that each movie gets a unique cluster label. We solve the right constrained matching problem for $500$ movies and make recommendations for $500$ users with at least $10$ recommendations for every user. Table~\ref{tab:pred2} shows that with average EG of $1.45$, D-WBM finds a more diverse matching for all users and on average a user loses only $1\%$ utility for this gain. To save computational time, in all our simulations we terminate D-WBM after 1 hour and take the best solution, while WBM converges to true optima. Hence the results are conservative estimates.

To further understand the matching result, we compare the movie recommendations by D-WBM and WBM for User ID 423. WBM matches her to movies that are all either Comedies or Dramas, with an average predicted rating of $4.07$. In contrast, D-WBM matches the user with movies from five different clusters, with an average movie rating of $4.05$, showing negligible loss in efficiency. Table \ref{tab:movielens} compares the recommended genres. While we chose to promote diversity in genre, the MovieLens dataset also provides information about the user's gender, age group, and occupation; D-WBM can encourage other types of diversity in movie recommendation.

\begin{table}
\centering
\begin{tabular}{*6c}
%\begin{tabular}{|p{1cm} p{1cm}|p{0.7cm}|p{0.7cm}|p{0.7cm}|p{0.7cm}|}
\hline
\multicolumn{2}{c}{Dataset}  &  \multicolumn{2}{c}{D-WBM} & \multicolumn{2}{c}{GD-WBM}\\
\hline
{}   &   {} & \PoD{}   & EG    & \PoD{}   & EG\\
\multicolumn{2}{c}{Movie-Lens}  &  0.99 & 1.45   & 0.99  & 1.45\\
\multicolumn{2}{c}{UIUC Reviewer}  &  0.92 & 1.63   & 0.83  & 1.60\\
\multicolumn{2}{c}{Sugiyama} &  0.94 & 4.28   & 0.93  & 4.28\\
\hline
\end{tabular}
\caption{Price of Diversity and Entropy Gain results for three real world datasets.}
\label{tab:pred2}
\end{table}

\begin{table}[!htbp]
\centering
\begin{scriptsize}
\begin{tabular}{|p{0.8cm}|p{2.4cm}|p{0.8cm}|p{2.8cm}|}
\multicolumn{2}{c}{D-WBM} & \multicolumn{2}{c}{WBM}\\
\hline
Cluster  & Genres   &Cluster   & Genres\\
\hline
 2 &  Drama, Thriller   & 2  & Drama,Thriller\\
1 & Adventure,Sci-Fi   & 2  &Crime,Drama,Thriller\\
3 & Documentary  & 0  & Comedy\\
3 &  Documentary   & 0  & Comedy,Drama\\
 0 & Comedy,Romance   & 0  & Comedy,Romance\\
4 & Drama,Horror  & 2  & Drama\\
4 &  Horror,Sci-Fi,Thriller   & 2  & Drama\\
2 & Drama   & 2  & Drama\\
0 & Comedy,Mystery  & 0  & Comedy,Mystery\\
1 &   Action,Thriller   & 2  & Action,Crime,Drama\\
\hline
\end{tabular}
\end{scriptsize}
\caption{Genres and Cluster labels of ten movies recommended to a user by WBM and D-WBM. The movies by D-WBM provide a broader genre coverage.}
\label{tab:movielens}

\end{table}

\subsection{Application to Reviewer Assignment}
In this section, we present an application of diverse matching to automatically determine the most appropriate reviewers for a manuscript by also ensuring that reviewers are different from each other. 

\subsubsection{UIUC Multi-Aspect Review Assignment Dataset}
We use the multi-aspect review assignment evaluation dataset \cite{karimzadehgan2009constrained} which is a benchmark dataset from UIUC. It contains $73$ papers accepted by SIGIR $2007$, and $189$ prospective reviewers who had published in the main information retrieval conferences. The dataset provides $25$ major topics and for each paper in the set, an expert provided $25$-dimensional label on that paper based on a set of defined topics. Similarly for the $189$ reviewers, a $25$-dimensional expertise representation is provided.

For the reviewer-paper bipartite graph, edge weights between each test paper and reviewer are set as the cosine distance of their label vectors. We cluster the reviewers into $5$ clusters based on their topic vectors using spectral clustering.
We set the constraints such that each paper matches with at least $3$ reviewers and every reviewer is allocated at least $1$ paper, while no reviewer is allocated more than $10$ papers.

Despite the constraints, D-WBM finds a diverse matching with \PoD{} of $0.92$. GD-WBM provides an average entropy gain of $1.60$ but pays a higher price of diversity as shown in Table~\ref{tab:pred2}. To delve deeper into the results, we take the example of $48^{\mathit{th}}$ paper titled ``Towards musical query-by-semantic-description using the CAL500 data set'' from the dataset. This paper is labeled as related to Topics T8 (Multimedia IR), T16 (Language models) and T3 (Other machine learning). WBM matches it to three reviewers with IDs $43$, $34$, and $158$\footnote{More information on the reviewers is available here:  \url{http://sifaka.cs.uiuc.edu/ir/data/review.html}}.  If we analyze their topic vectors, we find that all of them have the Language Models (T16) topic common between them. Not surprisingly, they are all found by our clustering method to be in the same cluster, as shown in Fig.~\ref{fig:block}.

On the other hand, diverse matching provides a match with three reviewers (IDs $131$, $153$ and $158$) from three different clusters. Reviewer $131$ provides a balance between IR and Language Model topics but also works on User Studies. Reviewer $153$ works on many topics relevant to the paper \textemdash Multimedia IR (T8), ML (T2, T3) and Text Categorization (T1). In D-WBM's reviewer set, Reviewers $153$ and $131$ have no common aspect between them while Reviewer $158$ shares interests with both. Having a set of reviewers who are similar to the query paper but complementary in skillsets may provide a well-rounded review with good coverage of different viewpoints. GD-WBM also finds a match which has higher entropy (three different clusters) than WBM. 
\begin{figure}
\centering
\includegraphics[width=0.8\columnwidth]{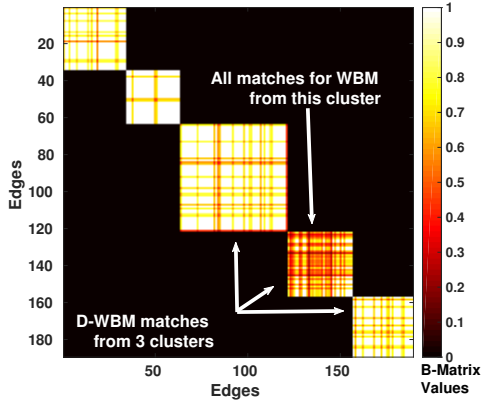}
\caption{Block Diagonal B-Matrix for Paper 48; WBM selects all matches from a single cluster}
\label{fig:block}
\end{figure}

\subsubsection{Scholarly Paper Recommendation Dataset}

To further test our method on matching reviewers with papers, we use the  Scholarly Paper Recommendation dataset provided by Sugiyama \etal~\cite{sugiyama2010scholarly}, which contains $50$ researchers and $100,351$ candidate papers from proceedings in the ACM Digital Library.

We select all papers from KDD 2000 to KDD 2010 from this dataset (a total of $1184$ papers) and find three reviewers for each of them from the given set of 50 reviewers.  We calculate edge weights between papers and reviewers as the cosine distance between the tf-idf vector of the query paper and reviewer's latest paper. No limit of maximum number of papers that a reviewer can review is imposed but each reviewer must be allocated at least one paper.
We divide the reviewers into 5 clusters using their tf-idf vectors with Spectral Clustering. The results show that D-WBM improves on the diversity of all 1184 papers with EG of $4.28$ as shown in Table~\ref{tab:pred2}. The larger EG in this dataset compared to UIUC is because EG decreases as we reduce the upper bound, so the net effect observed in UIUC dataset is also due to choice of bounds. Here, we removed one factor and noticed much larger entropy gain for little loss of efficiency ($6\%$).

\subsection{Effect of Bounds and Problem Size}

So far, we have set the cardinality bounds without discussing their effect on the matching results. One would expect that as bounds become tighter, the utility of WBM and D-WBM will suffer due to lesser search space. However, the question we answer here is how it affects the relative performance of the two methods as measured by \PoD{} and EG.

More specifically, we study $R^-$, as the number of edges in the matching is determined by it. We use UIUC dataset discussed before, where each reviewer must review at least one paper and we cluster the reviewers into 10 groups. Figure \ref{fig:ldp} shows that the \PoD{} is consistently high irrespective of the number of reviewers matched to every paper. The \PoD{} initially decreases as more clusters contribute to diversity but then slowly increases to $1$ as the problem becomes more and more constrained. Obviously, when $R^- = M$, there is only one matching possible and both WBM and D-WBM have the same utility. EG in general increases when $R^-$ is less than the number of clusters as new clusters can contribute to diversity. Among other bounds, setting $R^+$ to any value has no effect on optimization. Increasing $L^+$ allows WBM to overuse few good nodes, who might have low edge weights with everyone. Hence, WBM's entropy suffers significantly.

\begin{figure}
\centering
\includegraphics[width=0.75\columnwidth]{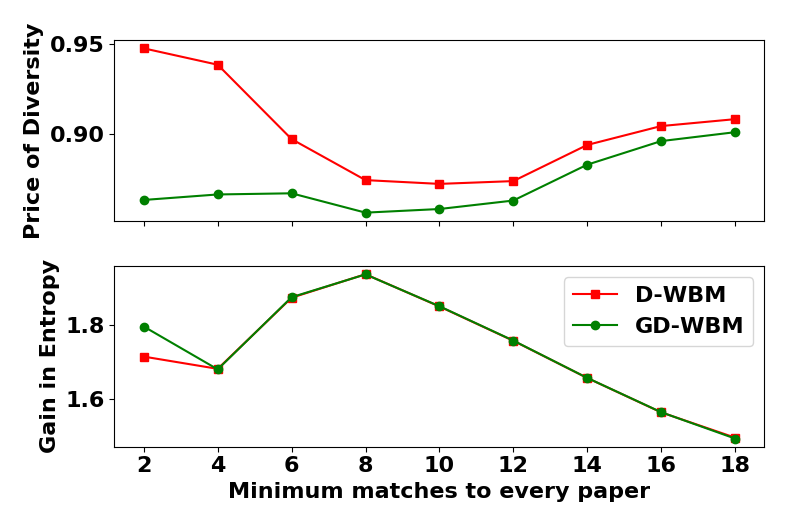}
\vspace{-0.2cm}
\caption{Change of \PoD{} and EG with increasing $R^-$.}
\label{fig:ldp}
\end{figure}

\begin{figure}
\centering
\includegraphics[width=0.85\columnwidth]{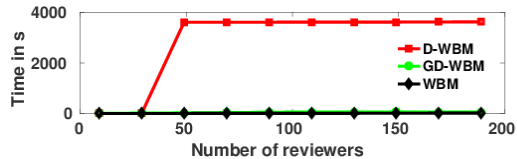}
\vspace{-0.2cm}
\caption{Runtime comparison as problem size increases}
\label{fig:gtime}
\end{figure}

%\subsection{Effect of Problem Size}
Finally, we discuss the effect of problem size on the performance of WBM, D-WBM, and GD-WBM. We use the UIUC dataset with $R^- =3, L^-=1$ and increase the number of reviewers available to review the papers. 
Figure \ref{fig:gtime} shows the relative time performance of the three methods. We can see that WBM and GD-WBM take much less time than D-WBM's MIQP solver. The latter time is capped at one hour and the best solution at that point is used for analysis. 

\section{Conclusions \& Future Research}\label{sec:conclusions}

In this paper, we presented a quantitative approach to balancing diversity and efficiency in a generalization of bipartite matching where agents on one side of the market can be matched to sets of agents or items on the other.  We propose a quadratic programming-based approach to solving a supermodular minimization problem that balances diversity and total weight of the solution.  The general problem is NP-hard, so we proposed a scalable greedy algorithm with theoretical performance bounds.  We proposed the price of diversity (\PoD{}), which measures efficiency loss due to enforcing diversity, and gave worst-case theoretical bounds on that metric.  Finally, we validated our methods on three real-world datasets, and showed that the price of diversity is quite good in practice.

Future research will focus on diverse matching for online problems, where edges arrive sequentially and on scaling diverse matching to larger datasets. Another area of work can be diverse matching with diversity of a set defined using Determinantal Point Process (DPP) kernels \cite{kulesza2012determinantal}, which do not require explicit clustering. Lastly, the trade-off between diversity and efficiency can be further explored using an objective function which combines efficiency maximizing WBM and entropy maximizing D-WBM.
 
%\newpage
\let\oldthebibliography=\thebibliography
  \let\endoldthebibliography=\endthebibliography
  \renewenvironment{thebibliography}[1]{%
    \begin{oldthebibliography}{#1}%
      \setlength{\parskip}{0ex}%
      \setlength{\itemsep}{0ex}%
  }%
  {%
    \end{oldthebibliography}%
  }
\bibliographystyle{named}
\bibliography{refs,new_refs}
\end{document}